\newcounter{opteq}
\newenvironment{opteq}{\refstepcounter{opteq}\align}{\tag{P\theopteq}\endalign}
\newcommand*{\rom}[1]{\expandafter\@slowromancap\romannumeral #1@}
\newtheorem{thm}{Theorem}
\newcommand{\txt}[1]{\text{\normalfont #1}}
\newcounter{savesection}
\newcounter{apdxsection}
\renewcommand\appendix{\par
	\setcounter{savesection}{\value{section}}%
	\setcounter{section}{\value{apdxsection}}%
	\setcounter{subsection}{0}%
	\gdef\thesection{\@Alph\c@section}}
\newcommand\unappendix{\par
	\setcounter{apdxsection}{\value{section}}%
	\setcounter{section}{\value{savesection}}%
	\setcounter{subsection}{0}%
	\gdef\thesection{\@arabic\c@section}}
\title{Analog beamforming for active imaging using sparse arrays}
\name{\begin{tabular}{c} Robin Rajam\"{a}ki$^{\dagger}$,
		~Sundeep Prabhakar Chepuri $^\star$,
		and Visa Koivunen$^\dagger
		$\end{tabular}}
\address{$^\dagger$ Aalto University, Espoo, Finland \\
	$^{\star}$ Indian Institute of Science, Bangalore, India
}
\begin{document}
	\setlength\belowcaptionskip{-15ex}
	%
	\maketitle
	\begin{abstract}		
		This paper studies analog beamforming in active sensing applications, such as millimeter-wave radar or ultrasound imaging. Analog beamforming architectures employ a single RF-IF chain connected to all array elements via inexpensive phase shifters. This can drastically lower costs compared to fully-digital beamformers having a dedicated RF-IF chain for each sensor. However, controlling only the element phases may lead to elevated side-lobe levels and degraded image quality. We address this issue by image addition, which synthesizes a high resolution image by adding together several lower resolution component images. Image addition also facilitates the use of sparse arrays, which can further reduce array costs. To limit the image acquisition time, we formulate an optimization problem for minimizing the number of component images, subject to achieving a desired point spread function. We propose a gradient descent algorithm for finding a locally optimal solution to this problem. We also derive an upper bound on the number of component images needed for achieving the traditional fully-digital beamformer solution.
	\end{abstract}
	%
	\section{Introduction}
    The use of high frequencies enables small array form factors by packing many elements into a small physical area. For example, 3D ultrasound imaging typically uses hundreds of sensors, each with a dedicated transceiver chain. Although the resulting large electrical aperture improves the array's resolution, the hardware cost, number of cables, power consumption, and computational load of the array may become prohibitively large. These issues are especially pronounced for fully-digital arrays, where each array element is connected to a separate RF-IF front-end and \emph{analog-to-digital converter} (ADC) or \emph{digital-to-analog converter} (DAC). 
    
    \emph{Sparse arrays} can significantly reduce the number of elements compared to uniform arrays of equivalent aperture, without sacrificing the array's ability to resolve scatterers \cite{hoctor1990theunifying,pal2010nested,wang2017coarrays}. Such arrays utilize the virtual \emph{co-array} structure consisting of the pairwise vector sums or differences of the physical array element positions. For instance, the \emph{sum co-array} determines the achievable set of \emph{point spread functions} (PSFs) of an active far field imaging array that uses linear processing at the transmitter and receiver \cite{hoctor1990theunifying}. A desired PSF may be synthesized using the \emph{image addition} technique \cite{hoctor1990theunifying,kozick1991linearimaging}, which adds together several component images corresponding to different transmit-receive element weight pairs. The number of component images should be kept as low as possible, to minimize the image acquisition time when transmitters operate coherently, as in a phased array.
    
    \emph{Hybrid beamforming} is another approach for reducing the array costs \cite{molisch2017hybrid}. Hybrid architectures reduce the number of RF-IF chains by pre-processing the antenna signals by a network of inexpensive low power phase shifters connecting every antenna to each front-end. Hybrid beamformer design has been extensively studied for millimeter-wave communications, where linear processing is used to precode and decode multiple data streams sent over a multiple-input multiple-output channel \cite{zhang2005variable,zhang2014onachieving,yu2016alternating,jin2018hybridprecoding,koochakzadeh2018beam,sohrabi2016hybrid,bogale2016onthenumber}, and hybrid beamformers are designed to maximize the channel capacity \cite{alkhateeb2014mimo}. In contrast, this paper considers \emph{active imaging} applications, where synthesizing a desirable PSF is of main interest \cite{hoctor1990theunifying}. Furthermore, we focus on the extreme case of \emph{fully-analog beamforming}, where all the array elements are connected to a single RF-IF-ADC/DAC chain by phase shifters with continuous phases.\footnote{We address the more general case of hybrid beamforming with quantized phase shifts in the longer journal version of this paper \cite{rajamaki2019hybrid}.} We utilize image addition to synthesize PSFs that are commonly achieved only by fully-digital arrays. Image addition also facilitates the use of sparse transmitting and receiving arrays, which further reduce the required number of array elements thereby reducing the hardware costs.
    
    The contributions of the paper are threefold. First, we formulate an optimization problem, where we design the analog transmit and receive beamforming weights achieving a desired PSF using as few component images as possible. Second, we develop a gradient descent algorithm for finding a local minimum of this non-convex problem. Third, we derive an upper bound on the number of component images, and give the beamformer weights achieving this bound in closed-form.
    
	\begin{figure*}[t]
		\centering
		\includegraphics[width=1\textwidth]{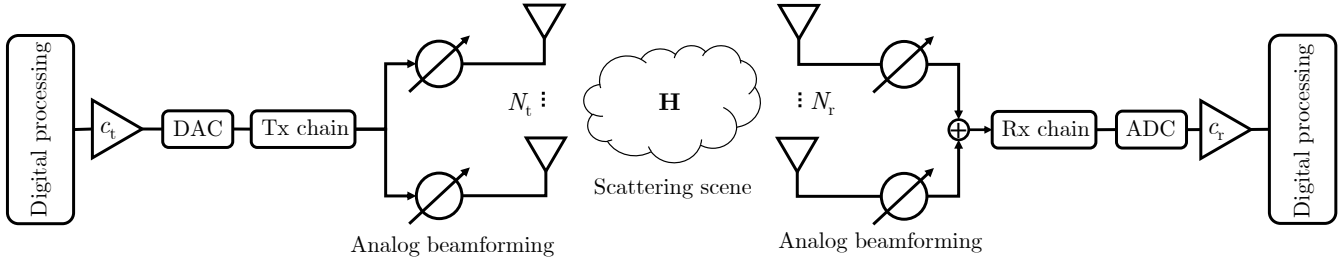} 
		\caption{Fully-analog beamforming architecture. The RF-IF front-end of the transmitter/receiver is connected to each array element via a phase shifter with continuous phase.}\label{fig:architecture}
	\end{figure*}

	\section{Signal model and definitions} \label{sec:definitions}
	Consider a sensor array with $N_\txt{t}$ transmit (Tx) and $N_\txt{r}$ receive (Rx) elements. As shown in Fig.~\ref{fig:architecture}, each array element is connected to a single RF-IF front-end and DAC or ADC via a phase shifter. This is referred to as a fully-analog beamforming architecture, in contrast to the fully-digital architecture, where each sensor has a dedicated front-end and DAC/ADC.

	We consider a phased array imaging system that sequentially scans a scattering scene by transmitting and receiving focused beams of narrowband signals. Such systems are typically used in, e.g., medical ultrasound imaging or radar. On the transmitter side, beamforming is achieved by combining the Tx array element outputs with weights $ \mathbf{w}_\txt{t}=c_t\mathbf{f}_\txt{t}\in\mathbb{C}^{N_\txt{t}} $, where $ c_\txt{t} \in \mathbb{C}$ is a digital transmit gain, and $ \mathbf{f}_\txt{t}\in \mathbb{C}^{N_\txt{t}} $ is a vector of transmitter phase shifts. Similarly, on the receiver side, the Rx element inputs are combined using the weights $ \mathbf{w}_\txt{r}=c_r\mathbf{f}_\txt{r}\in\mathbb{C}^{N_\txt{r}} $, where $ c_\txt{r} \in \mathbb{C} $ is a digital receive gain, and $ \mathbf{f}_\txt{r}\in \mathbb{C}^{N_\txt{r}} $ is a vector of receiver phase shifts. The received signal after beamforming and matched filtering is modeled as
		\begin{align}
	y &=  \mathbf{w}_\txt{r}^\txt{T}\mathbf{H}\mathbf{w}_\txt{t}+\mathbf{w}_\txt{r}^\txt{T}\mathbf{n},\label{eq:gamma_hat}
	\end{align}
	where $ \mathbf{H} \in\mathbb{C}^{N_\txt{r}\times N_\txt{t}}$ is the channel matrix, $ \mathbf{n}\in\mathbb{C}^{N_\txt{r}} $ is the receiver noise vector, and $ (\cdot)^\txt{T} $ denotes transposition.
		
	The \emph{point spread function} (PSF) is an important property characterizing an imaging system. The PSF is defined as the system's response to a unit-reflectivity, line-of-sight point scatterer. Specifically, for a scatterer located at $ \mathbf{v} \in\mathbb{R}^3$ and the array focused at $ \mathbf{u} \in\mathbb{R}^3$, the PSF is defined as $\psi(\mathbf{u},\mathbf{v}) = \mathbf{w}_\txt{r}^\txt{T}(\mathbf{u})\mathbf{a}_\txt{r}(\mathbf{v})\mathbf{a}^\txt{T}_\txt{t}(\mathbf{v})\mathbf{w}_\txt{t}(\mathbf{u}) \in \mathbb{C}$, where $ \mathbf{a}_\txt{x} $ is the steering vector, and subscript $\txt{x}\in\{\txt{t},\txt{r}\} $ is used to denote both the transmitter and receiver. If the scatterer is in the far field of linear Tx and Rx arrays, $ \mathbf{v} $ and $ \mathbf{u} $ simplify to angles in $ [-\pi/2,\pi/2] $. We henceforth omit the explicit dependence of $ \mathbf{u} $ and $ \mathbf{v} $ for notational convenience. The PSF may also be expressed using the Kronecker product $ \otimes $ as:
	\begin{align}
	\psi &=(\mathbf{a}_\txt{t}\otimes \mathbf{a}_\txt{r})^\txt{T}\text{vec}(\mathbf{w}_\txt{r}\mathbf{w}^\txt{T}_\txt{t}).\label{eq:psf}
	\end{align}
	To form an image, we steer the Tx and Rx arrays over a discrete set of directions $\{\mathbf{u}_i\}_{i=1}^{U}$ and measure reflectivities in \eqref{eq:gamma_hat} at each image pixel. However, a single Tx-Rx weight pair $ \mathbf{w}_\txt{t},\mathbf{w}_\txt{r} $, as in \eqref{eq:psf}, may not always suffice to achieve a desired PSF. In this case, the image quality may be improved by image addition \cite{hoctor1990theunifying}. Image addition synthesizes a high-resolution composite image by summing together several component images of lower resolution, which are formed using different Tx-Rx weight pairs. With image addition, the rank-1 matrix $\mathbf{w}_\txt{r}\mathbf{w}_\txt{t}^\txt{T}$ in \eqref{eq:psf} is replaced by the \emph{co-array weight matrix} $\mathbf{W}\in \mathbb{C}^{N_\txt{r}\times N_\txt{t}} $ defined \cite{kozick1991linearimaging}:
	\begin{align}
		\mathbf{W} &= \sum_{q=1}^{Q} \mathbf{w}_{\txt{r},q}\mathbf{w}_{\txt{t},q}^\txt{T}. \label{eq:W_d}
	\end{align}
	Each outer product $ \mathbf{w}_{\txt{r},q}\mathbf{w}_{\txt{t},q}^\txt{T} $ in \eqref{eq:W_d} corresponds to a transmission and reception with a different pair of effective Rx and Tx weight vectors $ \mathbf{w}_{\txt{r},q}$ and $\mathbf{w}_{\txt{t},q}$. These vectors may be retrieved from the SVD of matrix $ \mathbf{W} $ in the case of fully-digital beamforming \cite{kozick1991linearimaging}. The smaller the \emph{number of component images} $ Q $ is, the shorter the image acquisition time, as fewer transmissions/receptions are required to form an image. In the case of analog beamforming, \eqref{eq:W_d} can be decomposed as
	\begin{align}
		\mathbf{W}&= \sum_{q=1}^{Q} c_{\txt{r},q}c_{\txt{t},q}\mathbf{f}_{\txt{r},q}\mathbf{f}_{\txt{t},q}^\txt{T} 
		= \mathbf{F}_\txt{r}\txt{diag}(\mathbf{c})\mathbf{F}_\txt{t}^\txt{T}, \label{eq:W_a}
	\end{align}
	where $ \mathbf{c}=\mathbf{c}_\txt{t}\circ \mathbf{c}_\txt{r}\in\mathbb{C}^{Q}$ is the Hadamard product of the digital transmit and receive weights $\mathbf{c}_\txt{x} = [c_{\txt{x},1},\ldots, c_{\txt{x},Q}]^\txt{T}$. Matrix $ \mathbf{F}_\txt{x} =[\mathbf{f}_{\txt{x},1},\ldots,\mathbf{f}_{\txt{x},Q}] \in \mathscr{F}_\txt{x}$ contains the phase shift vectors of the component images, and
	\begin{align}
	\mathscr{F}_\txt{x} = \big\{\mathbf{F}=\exp(j\boldsymbol{\Phi})\ |\ \boldsymbol{\Phi}\in [0,2\pi)^{N_\txt{x}\times Q}\big\} \label{eq:F_set}
	\end{align}
	denotes the set of such phase shift matrices. If we know $ \mathbf{c} $, we may set $ \mathbf{c}_\txt{t} = \mathbf{1}_Q $ and $ \mathbf{c}_\txt{r}=\mathbf{c} $, where $ \mathbf{1}_Q $ is a $ Q $-dimensional vector of ones. This simple choice also maximizes the transmit power under the constraint $ |c_{\txt{t},q}|\leq 1, q=1,2,\ldots,Q $.
	
	\section{Bounds on no. of component images $ Q $} \label{sec:analytical}
	Next, we derive an upper and lower bound on the number of component images $ Q $ required by an analog beamformer for factorizing any co-array matrix $ \mathbf{W} \in\mathbb{C}^{N_\txt{r}\times N_\txt{t}}$ as in \eqref{eq:W_a}. In the case of fully-digital beamforming, SVD can be used to decompose $ \mathbf{W} $ as in \eqref{eq:W_d} using $Q_\txt{d}= \text{rank}(\mathbf{W})\leq\min(N_\txt{r},N_\txt{t}) $ component images \cite{kozick1991linearimaging}. Any analog factorization in \eqref{eq:W_a} must therefore satisfy $ Q\geq Q_\txt{d} \geq \min(N_\txt{r},N_\txt{t})$, where $ Q_\txt{d} $ is the number of component images of the fully-digital beamformer.
	
	An upper bound on $ Q $ may be derived based on the fact that any $ \mathbf{w}_\txt{x}\in\mathbb{C}^{N_\txt{x}}$ can be factorized as $ \mathbf{w}_\txt{x} =\mathbf{F}_\txt{x}\mathbf{c}_\txt{x} $, where $ \mathbf{c}_\txt{x}\in\mathbb{C}^2 $ and $ \mathbf{F}_\txt{x}\in\mathscr{F}_\txt{x}\subset \mathbb{C}^{N_\txt{x}\times 2}$ following \eqref{eq:F_set} \cite[Theorem~1]{zhang2005variable}. Consequently, given a fully-digital factorization of $ \mathbf{W} $ using $ Q_\txt{d} $ component images, we may construct a fully-analog factorization of the same $ \mathbf{W} $ using $ Q= 4Q_\txt{d} $ component images. In particular, the phase shifts and digital weights are given by the following theorem:
	\begin{thm}[Upper bound on $ Q $]\label{thm:M1F1Jinf}
		Any $ \mathbf{W}=\sum_{\tilde{q}=1}^{Q_\txt{d}}\mathbf{w}_{\txt{r},\tilde{q}}\mathbf{w}_{\txt{t},\tilde{q}}^\txt{T} \in\mathbb{C}^{N_\txt{r}\times N_\txt{t}}$ may be factorized as $ \mathbf{W} = \sum_{{q} = 1}^{4Q_\txt{d}}{c}_{\txt{r},{q}}{c}_{\txt{t},{q}}\mathbf{f}_{\txt{r},{q}}\mathbf{f}_{\txt{t},{q}}^\txt{T} $, with $ {c}_{\txt{x},q}\in\mathbb{C} $; and $ \mathbf{f}_{\txt{x},q}\in \mathscr{F}_\txt{x}\subset \mathbb{C}^{N_\txt{x}}$ following \eqref{eq:F_set}. For example, a valid factorization is 
		\begin{align}
		\mathbf{f}_{\txt{x},{q}} &= \exp(j\boldsymbol{\phi}_{\txt{x},\tilde{q}})\label{eq:bx_analog_inf}\\
		{c}_{\txt{x},{q}}&=\|\mathbf{w}_{\txt{x},\tilde{q}}\|_\infty/2, \label{eq:gx_analog_inf}
		\end{align}
		where $\boldsymbol{\phi}_{\txt{x},\tilde{q}}\!=\!\measuredangle \mathbf{w}_{\txt{x},\tilde{q}}+(-1)^{i_\txt{x}+1}\cos^{-1}(|\mathbf{w}_{\txt{x},\tilde{q}}| \|\mathbf{w}_{\txt{x},\tilde{q}}\|^{-1}_\infty)$; $ \tilde{q}= \lceil{q}/4 \rceil$; $i_\txt{r} = \lceil (1+({q}-1) \bmod 4)/2\rceil $; and $ i_\txt{t}=1+({q}-1) \bmod 2$. Here $ \measuredangle $, $ \cos^{-1}, $ and $ |\cdot| $ are applied elementwise.
	\end{thm}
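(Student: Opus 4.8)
The plan is to reduce the claim to a single scalar fact and then carefully flatten the indices. The underlying identity, which is the content of \cite[Theorem~1]{zhang2005variable}, is that any complex scalar $w$ with $|w|\le A$ is the average of two equal-magnitude phasors: setting $\delta=\cos^{-1}(|w|/A)\in[0,\pi/2]$, one has $\tfrac{A}{2}\big(e^{j(\angle w+\delta)}+e^{j(\angle w-\delta)}\big)=\tfrac{A}{2}e^{j\angle w}\cdot 2\cos\delta=|w|\,e^{j\angle w}=w$. First I would apply this entrywise to each digital weight vector $\mathbf{w}_{\txt{x},\tilde{q}}$ with the scaling $A=\|\mathbf{w}_{\txt{x},\tilde{q}}\|_\infty$. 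Because $|w_i|\le\|\mathbf{w}_{\txt{x},\tilde{q}}\|_\infty$ for every component, the arccosine argument lies in $[0,1]$ for \emph{all} entries simultaneously, so a single pair of phase-only vectors indexed by $i_\txt{x}\in\{1,2\}$ (the two sign choices $\pm\delta$ captured by $(-1)^{i_\txt{x}+1}$) reproduces $\mathbf{w}_{\txt{x},\tilde{q}}=\tfrac12\|\mathbf{w}_{\txt{x},\tilde{q}}\|_\infty\sum_{i_\txt{x}=1}^{2}\exp(j\boldsymbol{\phi}_{\txt{x},\tilde{q}})$. This is exactly the choice \eqref{eq:bx_analog_inf}--\eqref{eq:gx_analog_inf}.

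Next I would substitute these two-term decompositions of $\mathbf{w}_{\txt{r},\tilde{q}}$ and $\mathbf{w}_{\txt{t},\tilde{q}}$ into a single rank-one summand of $\mathbf{W}$ and expand the product of the two sums. This yields
\begin{align}
\mathbf{w}_{\txt{r},\tilde{q}}\mathbf{w}_{\txt{t},\tilde{q}}^\txt{T}
= \frac{\|\mathbf{w}_{\txt{r},\tilde{q}}\|_\infty\,\|\mathbf{w}_{\txt{t},\tilde{q}}\|_\infty}{4}
\sum_{i_\txt{r}=1}^{2}\sum_{i_\txt{t}=1}^{2}
\exp(j\boldsymbol{\phi}_{\txt{r},\tilde{q}})\,\exp(j\boldsymbol{\phi}_{\txt{t},\tilde{q}})^\txt{T},\nonumber
\end{align}
a sum of four phase-only rank-one terms, each carrying the constant digital gain $c_{\txt{r},q}c_{\txt{t},q}=\tfrac14\|\mathbf{w}_{\txt{r},\tilde{q}}\|_\infty\|\mathbf{w}_{\txt{t},\tilde{q}}\|_\infty$ of \eqref{eq:gx_analog_inf}. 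Summing over $\tilde{q}=1,\ldots,Q_\txt{d}$ replaces each digital component image by exactly four analog component images, so the total count is $4Q_\txt{d}$, matching the bound and the form \eqref{eq:W_a}.

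Finally I would verify the flattening of the triple index $(\tilde q,i_\txt{r},i_\txt{t})$ into the single index $q$. The map $\tilde q=\lceil q/4\rceil$ assigns four consecutive values of $q$ to each digital term, while $i_\txt{r}=\lceil(1+(q-1)\bmod 4)/2\rceil$ and $i_\txt{t}=1+(q-1)\bmod 2$ enumerate the four pairs $(i_\txt{r},i_\txt{t})\in\{1,2\}^2$ bijectively as $(q-1)\bmod 4$ runs through $0,1,2,3$; the sign $(-1)^{i_\txt{x}+1}$ then selects $+\delta$ or $-\delta$ inside $\boldsymbol{\phi}_{\txt{x},\tilde q}$. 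Checking these four residues by hand confirms the bijection. Since the argument is purely constructive, there is no genuine conceptual obstacle; the two points demanding care are the \emph{simultaneous} validity of the scalar identity across every vector entry (which is exactly what the infinity-norm scaling $A=\|\mathbf{w}_{\txt{x},\tilde{q}}\|_\infty$ secures) and the easy-to-misorder bookkeeping of the index maps, which I expect to be the only fiddly part.
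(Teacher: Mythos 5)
Your proof is correct and takes essentially the same approach as the paper's: decompose each $\mathbf{w}_{\txt{x},\tilde{q}}$ into two equal-magnitude phase-only vectors (the content of \cite[Theorem~1]{zhang2005variable}), expand each rank-one term $\mathbf{w}_{\txt{r},\tilde{q}}\mathbf{w}_{\txt{t},\tilde{q}}^\txt{T}$ into four phase-only outer products, and count $4Q_\txt{d}$ over the $Q_\txt{d}$ digital components. The only difference is one of detail: you prove the underlying scalar two-phasor identity and the index bijection explicitly, whereas the paper's sketch cites the lemma and defers these verifications to its journal version.
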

	\begin{proof}[Proof sketch (see \cite{rajamaki2019hybrid} for details)]
		By \cite[Theorem~1]{zhang2005variable}, any $ \mathbf{w}_\txt{x}=\sum_{m=1}^2 c_{\txt{x},m}\mathbf{f}_{\txt{x},m} $. Consequently, $ \mathbf{w}_\txt{r}\mathbf{w}_\txt{t}^\txt{T} = \sum_{m=1}^4 c_{\txt{r},i}c_{\txt{t},l}\\ \mathbf{f}_{\txt{r},i}\mathbf{f}_{\txt{t},l}^\txt{T}$, where $ i $ and $ l $ are functions of the summation index $ m $. As $ \mathbf{W} $ is a sum of $ Q_\txt{d} $ rank-1 matrices, we have $ Q=4Q_\txt{d} $. 
	\end{proof}
	The solution given by Theorem~\ref{thm:M1F1Jinf} is actually non-unique, and it is possible to factorize \eqref{eq:W_a} using another set of $ Q=4Q_\txt{d} $ component weights. Nevertheless, the choice made in \eqref{eq:bx_analog_inf} and \eqref{eq:gx_analog_inf} is particularly simple, and it enables us to reduce the number of component images to $ Q=Q_\txt{d} $, at the expense of doubling the number of phase shifters connected to each array element. This follows from the observation that any $ \mathbf{w}_\txt{x}\in\mathbb{C}^{N_\txt{x}} $ that can be factorized as $\mathbf{w}_\txt{x}=c_\txt{x}\mathbf{F}_\txt{x}\mathbf{1}_2 $, where $ c_\txt{x}\in\mathbb{C} $, and $ \mathbf{F}_\txt{x}\in\mathscr{F}_\txt{x}\subset \mathbb{C}^{N_\txt{x}\times 2}$ following \eqref{eq:F_set}, can be implemented by an analog architecture using $ 2N_\txt{x} $ phase shifters \cite{zhang2014onachieving,sohrabi2016hybrid,bogale2016onthenumber}. However, this requires a modification to the architecture in Fig.~\ref{fig:architecture}, and will therefore not be considered henceforth.
	
	\section{Problem formulation}\label{sec:problem}
 	The goal of the optimization problem formulated in this paper is to minimize the number of component images $ Q$, while achieving a desired PSF. Assuming that the PSF is evaluated for a set of $ V $ discrete target directions $ \{\mathbf{v}_i\}_{i=1}^V $, we may express the desired PSF as $ \boldsymbol{\psi} \in\mathbb{C}^{V}$ and the realized PSF as $ \mathbf{A}\txt{vec}(\mathbf{W}) $. The $ i $th row of measurement matrix $ \mathbf{A}\in\mathbb{C}^{V \times N_\txt{r}N_\txt{t}} $ corresponds to vector $ \mathbf{a}_\txt{t}^\txt{T}(\mathbf{v}_i)\otimes \mathbf{a}_\txt{r}^\txt{T}(\mathbf{v}_i) $. Since the analog beamforming architecture imposes the constraint that $ \mathbf{W}$ should be factorized as \eqref{eq:W_a}, the vectorized $ \mathbf{W} $ may also be expressed using the Khatri-Rao product $ \diamond $ as $ \txt{vec}(\mathbf{W}) = (\mathbf{F}_\txt{t} \diamond \mathbf{F}_\txt{r})\mathbf{c} $. This leads us to formulate the \emph{analog beamformer weight optimization problem}:
	\begin{opteq}
		\underset{\mathbf{F}_{\txt{x}}\in \mathscr{F}_\txt{x},\mathbf{c}\in\mathbb{C}^{Q},Q\in\mathbb{N}_+}{\text{minimize}}\ &Q\nonumber \\
		\text{subject to}\qquad  &\|\boldsymbol{\psi} - \mathbf{A}(\mathbf{F}_\txt{t}\diamond \mathbf{F}_\txt{r})\mathbf{c}\|_2\leq \varepsilon_{\max}, \label{p:h}
	\end{opteq}
	where $ \varepsilon_{\max} \in\mathbb{R}_+ $ is an error tolerance. The fact that $ Q $ is unknown adds an extra layer of complexity to problem \eqref{p:h}. Fixing $ Q $ yields the following simpler optimization problem:
	\begin{opteq}
		\underset{\mathbf{F}_\txt{x}\in \mathscr{F}_\txt{x},\mathbf{c}\in\mathbb{C}^{Q}}{\text{minimize}}\ & \|\boldsymbol{\psi}-\mathbf{A}(\mathbf{F}_\txt{t}\diamond \mathbf{F}_\txt{r})\mathbf{c}\|_2^2.\label{p:h_alt}
	\end{opteq}
	If we can solve \eqref{p:h_alt}, we can easily recover the solution to \eqref{p:h} by finding the smallest $ Q $ for which the objective of \eqref{p:h_alt} does not exceed $ \varepsilon_{\max}^2 $. Note that in practice, the maximum value of $ Q $ is determined by Theorem~\ref{thm:M1F1Jinf}, or by a design constraint on the minimum imaging frame rate.
		
	\section{Gradient Descent algorithm}\label{sec:algorithms}
	In this section, we present a simple gradient descent method for solving \eqref{p:h_alt}. We start by noting that the optimal value of $ \mathbf{c} $ in \eqref{p:h_alt} is the least-squares solution $ \mathbf{c} =(\mathbf{A}(\mathbf{{F}}_\txt{t}\diamond \mathbf{{F}}_\txt{r}))^\dagger\boldsymbol{\psi} $, where $ \dagger $ denotes the pseudo-inverse. We also write the analog weight matrix $ \mathbf{{F}}_\txt{x} $ directly as a function of the unknown phase matrix $ \boldsymbol{\Phi}_\txt{x}\in\mathbb{R}^{N_\txt{x}\times Q}  $, i.e., $ \mathbf{{F}}_\txt{x}(\boldsymbol{\Phi}_\txt{x}) = \exp(j\boldsymbol{\Phi}_\txt{x}) $, where we apply the exponential function elementwise. Similar to \cite{jin2018hybridprecoding}, we then express \eqref{p:h_alt} in terms of variables $ \boldsymbol{\Phi}_\txt{r},\boldsymbol{\Phi}_\txt{t} $ as:
	\begin{opteq}
		\underset{\boldsymbol{\Phi}_\txt{x}\in\mathbb{R}^{N_\txt{x}\times Q}}{\text{minimize}}	\underbrace{\|(\mathbf{I}_V-\mathbf{K}(\boldsymbol{\Phi}_\txt{t},\boldsymbol{\Phi}_\txt{r})\mathbf{K}^\dagger(\boldsymbol{\Phi}_\txt{t},\boldsymbol{\Phi}_\txt{r}))\boldsymbol{\psi}\|_2^2}_{{J}(\boldsymbol{\Phi}_\txt{t},\boldsymbol{\Phi}_\txt{r})}, \label{p:a_unconstr}	
	\end{opteq}
	where we denote the objective function as $ {J}\in\mathbb{R}_+$, and define
	\begin{align*}
	\mathbf{K}(\boldsymbol{\Phi}_\txt{t},\boldsymbol{\Phi}_\txt{r}) &= \mathbf{A}(\mathbf{{F}}_\txt{t}(\boldsymbol{\Phi}_\txt{t})\diamond \mathbf{{F}}_\txt{r}(\boldsymbol{\Phi}_\txt{r})).
	\end{align*}
	Since \eqref{p:a_unconstr} is an unconstrained optimization problem with a continuous and differentiable objective function, we can find a local minimum of $ J $ using gradient descent. Straightforward computations show that the gradient is (see Appendix~\ref{ap:gd}):
	\begin{align}
	\nabla_{\boldsymbol{\Phi}_\txt{x}}{J}&=-2\Im\{\mathbf{{F}}_\txt{x}\circ\txt{mat}_{N_\txt{x}\times Q}((\partial_{\mathbf{K}}{J})(\partial_{\mathbf{{F}}_\txt{x}}\mathbf{K}))\},\label{eq:grad}
	\end{align}
	where the respective complex-valued matrix derivatives are
	\begin{align}
	\partial_{\mathbf{K}}{J}&= \txt{vec}^\txt{H}((\mathbf{K}\mathbf{K}^\dagger-\mathbf{I}_{V})\boldsymbol{\psi}(\mathbf{K}^\dagger\boldsymbol{\psi})^\txt{H})\label{eq:DDf}\\
	\partial_{\mathbf{{F}}_\txt{r}}\mathbf{K} &= (\mathbf{I}_{Q}\otimes\mathbf{A})((\mathbf{I}_{Q}\diamond \mathbf{{F}}_\txt{t})\otimes \mathbf{I}_{N_\txt{r}})\label{eq:DBrD}\\
	\partial_{\mathbf{{F}}_\txt{t}}\mathbf{K} &= (\mathbf{I}_{Q}\otimes\mathbf{A})(\mathbf{I}_{N_\txt{t}Q}\diamond (\mathbf{{F}}_\txt{r}\otimes \mathbf{1}_{N_\txt{t}}^\txt{T})).\label{eq:DBtD}
	\end{align}
	Here we define $ \partial_\mathbf{Z}\mathbf{X}(\mathbf{Z},\mathbf{Z}^*)=\frac{\partial \txt{vec}(\mathbf{X})}{\partial \txt{vec}^\txt{T}(\mathbf{Z})} \in\mathbb{C}^{AB\times CD} $, where  $\mathbf{X}\in\mathbb{C}^{A\times B} $ and $ \mathbf{Z} \in\mathbb{C}^{C\times D}$ \cite{hjorungnes2007complex}. Furthermore, $ (\cdot)^* $ denotes complex conjugation, $ (\cdot)^\txt{H} $ conjugate transposition, and $ \txt{mat}_{A\times B} $ reshapes an $AB$-dimensional vector into an $A \times B $ matrix. Given a step size $ \mu\in \mathbb{R}_{++} $, we update the gradient as $ \boldsymbol{\Phi}_\txt{x}\gets \boldsymbol{\Phi}_\txt{x}-\mu \nabla_{\boldsymbol{\Phi}_\txt{x}}{J} $. The update step is repeated until a maximum number of iterations $ k_{\max} $ or tolerance $ \varepsilon_{\max} $ is reached (see Algorithm~\ref{alg:gd}).
	\begin{algorithm}[h]
		\caption{Gradient descent algorithm for \eqref{p:h_alt}} \label{alg:gd}
		\begin{algorithmic}[1]		
			\Procedure{GradDesc}{$\mathbf{A},\boldsymbol{\psi},\mathbf{F}_\txt{r},\mathbf{F}_\txt{t},\mu,k_{\max},\varepsilon_{\max}$}
			\State $ \{k,\varepsilon,\mathbf{K}\}\gets \{0,\infty,\mathbf{A}(\mathbf{F}_\txt{t}\diamond\mathbf{F}_\txt{r})\} $
			\While{$k < k_{\max} \vee \varepsilon > \varepsilon_{\max}$}	
				\State Update derivative $ \partial_{\mathbf{K}}{J} $ using \eqref{eq:DDf}
				\For{$ \txt{x}\in\{\txt{t},\txt{r}\} $}
				\State Update derivative $ \partial_{\mathbf{{F}}_\txt{x}}\mathbf{K} $ using \eqref{eq:DBrD} or \eqref{eq:DBtD}
				\State Update gradient $ \nabla_{\boldsymbol{\Phi}_\txt{x}}J $ using \eqref{eq:grad}
				\State $ \mathbf{{F}}_\txt{x} \gets \exp( j(\measuredangle\mathbf{F}_\txt{x}-\mu\nabla_{\boldsymbol{\Phi}_\txt{x}}{J}))$
				\EndFor
				\State $ \mathbf{K}\gets \mathbf{A}(\mathbf{{F}}_\txt{t}\diamond\mathbf{{F}}_\txt{r}) $
				\State $ \varepsilon \gets \|(\mathbf{I}_V-\mathbf{K}\mathbf{K}^\dagger)\boldsymbol{\psi}\|_2 $
				\State $ k\gets k+1 $
			\EndWhile
			\State $ \{\mathbf{c}_\txt{r},\mathbf{c}_\txt{t}\} \gets \{\mathbf{K}^\dagger\boldsymbol{\psi},\mathbf{1}_{Q}\} $
			\State \Return $\mathbf{{F}}_\txt{r},\mathbf{{F}}_\txt{t},\mathbf{c}_\txt{r},\mathbf{c}_\txt{t}$
			\EndProcedure
		\end{algorithmic}
	\end{algorithm}
	The solution given by Algorithm~\ref{alg:gd} depends on the initialization of $ \mathbf{F}_\txt{x} $, since \eqref{p:a_unconstr} is a non-convex problem. Multiple random initializations may therefore be useful. Nevertheless, Algorithm~\ref{alg:gd} is guaranteed to improve upon an initial solution, provided that it is not a local minimum, and that the step size $ \mu $ is not too large. In the next section, we will see that good results can be achieved by using only a single random initialization of $ \mathbf{F}_\txt{x} $, and finding an appropriate step size $ \mu $ simply by trial-and-error.
		
	\section{Numerical examples} \label{sec:numerical}
	Next, we compare the PSFs of two analog beamformers with linear array geometries (Fig.~\ref{fig:architectures}). In particular, we consider a \emph{uniform linear array} (ULA) with $ N=11 $ elements, and a \emph{minimum-redundancy array} (MRA) \cite{moffet1968minimumredundancy,hoctor1996arrayredundancy} with $N = 7$ elements \cite{kohonen2014meet}. Both arrays span an aperture of $10\lambda/2 $, where $ \lambda/2 $ is the smallest inter-element spacing. Assuming ideal, identical, and omnidirectional transceiving elements, the transmit and receive steering vector becomes $ \mathbf{a}(\varphi)= \exp(j\pi \mathbf{d}\sin\varphi )$, where $ \mathbf{d}$ denotes the normalized array element positions. For the ULA: $\mathbf{d} = [-5, -4,\ldots, 5]^\txt{T}  $, and for the MRA: $\mathbf{d} = [-5, -4,-2, 0, 2, 4, 5]^\txt{T}  $. A Dolph-Chebyshev \cite{dolph1946acurrent} beampattern with $ -40 $ dB sidelobes is selected as the desired PSF. We initialize Algorithm~\ref{alg:gd} using uniformly distributed random phases, i.e., $ \mathbf{F}_\txt{x} = \exp(j\boldsymbol{\Phi}) $, where $ \Phi_{nq} \sim \txt{Uni}(0,2\pi)$. Furthermore, we set the step size to $ \mu = 10^{-3} $, the maximum number of iterations to $ k_{\max}=10^4 $, and the tolerance to $ \varepsilon_{\max} = 10^{-4} \|\boldsymbol{\psi}\|_2$. We sample the measurement matrix $ \mathbf{A} $ uniformly at $ V=99 $ azimuth angles between $-\pi/2$ and $ \pi/2 $. After finding the beamforming weights, we evaluate the realized PSF at $ 200 $ different angles in the same interval. 

	Fig.~\ref{fig:bp_lin_a}~(a) shows the desired and realized PSF of the ULA. Algorithm~\ref{alg:gd} yields a good approximation of the desired PSF using only a single component image. This approximation gets progressively better as $ Q $ is increased. Fig.~\ref{fig:bp_lin_a}~(b) shows that the MRA requires at least $ Q=2 $ component images to achieve an acceptable PSF. We note that the fully-digital ULA achieves the desired PSF using one component image, whereas the fully-digital MRA requires two components.\footnote{Using an alternating minimization algorithm \cite{rajamaki2019hybrid} with tolerance $ \varepsilon_{\max} $.} By Theorem~\ref{thm:M1F1Jinf}, the fully-analog beamformers then exactly achieve the fully-digital PSFs using $ Q= 4 $ (ULA), respectively $ Q= 8 $ (MRA) component images. Although the analog/digital MRA and ULA can achieve the same PSF, the MRA incurs a loss in array gain due to having fewer elements. 	
		\begin{figure}[]
		\centering
		\includegraphics[width=0.86\linewidth]{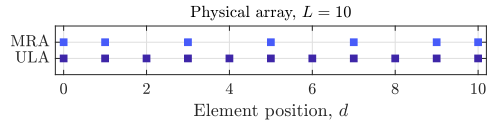}
		\caption{Uniform and sparse array configurations.}\label{fig:architectures}
	\end{figure}
	\begin{figure}[]
		\begin{minipage}[b]{.49\linewidth}
			\centering
			\centerline{\includegraphics[width=1\textwidth]{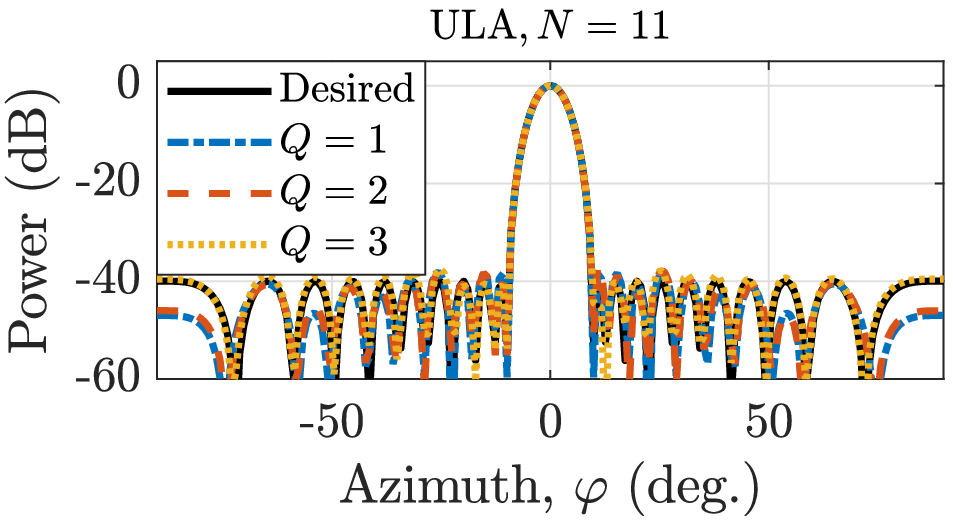}}
				\centerline{(a) ULA}\medskip
		\end{minipage}
		\hfill
		\begin{minipage}[b]{.49\linewidth}
			\centering
			\centerline{\includegraphics[width=1\textwidth]{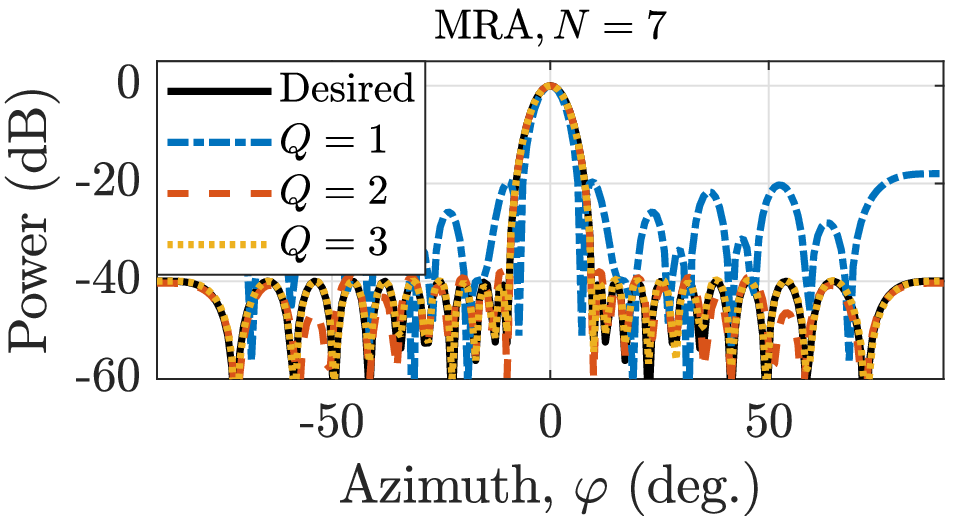}}
			\centerline{(b) MRA}\medskip
		\end{minipage}
		\caption{PSFs of analog beamformers. The (a) uniform array approximately achieves the desired PSF using one component image, whereas the (b) sparse array requires two.} \label{fig:bp_lin_a}
	\end{figure}

	\section{Conclusions} \label{sec:conclusions}
	This paper considered active imaging using phased arrays with an analog beamforming architecture consisting of continuous phase shifters. We proposed a gradient descent algorithm for jointly finding the transmit and receive element weights that achieve a desired PSF using as few transmissions as possible. Moreover, we derived a bound on the maximum number of transmissions required by such an array. We also demonstrated that combining analog beamforming with sparse arrays allows for significant reductions in the number of elements and RF-IF front-ends, without compromising main lobe width or side-lobe levels.
	
	\appendix
	\section{Derivation of gradient}\label{ap:gd}
	In this appendix, we derive the gradient in \eqref{eq:grad}. We are interested in $ \partial_{\boldsymbol{\Phi}_\txt{x}}{J} \in\mathbb{R}^{1\times N_\txt{x} Q}$, which by the chain rule is \cite{hjorungnes2007complex}
	\begin{align}
	\partial_{\boldsymbol{\Phi}_\txt{x}}{J} &= (\partial_{\mathbf{K}}{J})(\partial_{\boldsymbol{\Phi}_\txt{x}}\mathbf{K})+ (\partial_{\mathbf{K}^*}{J})(\partial_{\boldsymbol{\Phi}_\txt{x}}\mathbf{K}^*). \label{eq:DPf}
	\end{align}
	 Applying the chain rule again to \eqref{eq:DPf} yields
	\begin{align}
	\partial_{\boldsymbol{\Phi}_\txt{x}}\mathbf{K} &= (\partial_{\mathbf{{F}}_\txt{x}}\mathbf{K})( \partial_{\boldsymbol{\Phi}_\txt{x}}\mathbf{{F}}_\txt{x})+ (\partial_{\mathbf{{F}}_\txt{x}^*}\mathbf{K})( \partial_{\boldsymbol{\Phi}_\txt{x}}\mathbf{{F}}_\txt{x}^*)\label{eq:DPD}\\
	\partial_{\boldsymbol{\Phi}_\txt{x}}\mathbf{K}^* &= (\partial_{\mathbf{{F}}_\txt{x}}\mathbf{K}^*)( \partial_{\boldsymbol{\Phi}_\txt{x}}\mathbf{{F}}_\txt{x})+ (\partial_{\mathbf{{F}}_\txt{x}^*}\mathbf{K}^*)( \partial_{\boldsymbol{\Phi}_\txt{x}}\mathbf{{F}}_\txt{x}^*).\label{eq:DPDs}
	\end{align}
	Noting that $ \mathbf{K} $ only depends on $ \mathbf{F}_\txt{x} $, we have $ \partial_{\mathbf{{F}}_\txt{x}^*}\mathbf{K} = \partial_{\mathbf{{F}}_\txt{x}}\mathbf{K}^*= \mathbf{0} $. Substituting \eqref{eq:DPD} and \eqref{eq:DPDs} into \eqref{eq:DPf} then yields
	\begin{align*}
	\partial_{\boldsymbol{\Phi}_\txt{x}}{J}\!=\!(\partial_{\mathbf{K}}{J})(\partial_{\mathbf{{F}}_\txt{x}}\mathbf{K})( \partial_{\boldsymbol{\Phi}_\txt{x}}\mathbf{{F}}_\txt{x})\!+\!(\partial_{\mathbf{K}^*}{J})(\partial_{\mathbf{{F}}_\txt{x}^*}\mathbf{K}^*)( \partial_{\boldsymbol{\Phi}_\txt{x}}\mathbf{{F}}_\txt{x}^*).
	\end{align*}
	Combining $ \partial_{\mathbf{K}^*}{J} $ derived in \cite[Lemma~1]{jin2018hybridprecoding}, with the fact that $ \partial_{\mathbf{K}}{J}=(\partial_{\mathbf{K}^*}{J})^* $ yields \eqref{eq:DDf}. Equations \eqref{eq:DBrD} and \eqref{eq:DBtD} follow from identities $ \txt{vec}(\mathbf{K}) = (\mathbf{I}_{Q}\otimes\mathbf{A})\txt{vec}(\mathbf{{F}}_\txt{t}\diamond \mathbf{{F}}_\txt{r}) $ and	$ \txt{vec}(\mathbf{F}_\txt{t}\diamond \mathbf{F}_\txt{r})= ((\mathbf{I}_{Q}\diamond \mathbf{F}_\txt{t})\otimes \mathbf{I}_{N_\txt{r}})\txt{vec}(\mathbf{F}_\txt{r})
	= (\mathbf{I}_{N_\txt{t}Q}\diamond (\mathbf{F}_\txt{r}\otimes \mathbf{1}_{N_\txt{t}}^\txt{T})\txt{vec}(\mathbf{F}_\txt{t})$ \cite[Proposition~3.1.2]{roemer2013advanced}. Furthermore, we find that
	\begin{align}
		\partial_{\boldsymbol{\Phi}_\txt{x}}\mathbf{{F}}_\txt{x} &=j\txt{diag}(\txt{vec}(\mathbf{{F}}_\txt{x})).\label{eq:ap_DPB}
	\end{align}
	Confirming that $\partial_{\boldsymbol{\Phi}_\txt{x}}\mathbf{{F}}_\txt{x}^*= (\partial_{\boldsymbol{\Phi}_\txt{x}}\mathbf{{F}}_\txt{x})^* $ and $ \partial_{\mathbf{{F}}_\txt{x}^*}\mathbf{K}^*=(\partial_{\mathbf{{F}}_\txt{x}}\mathbf{K})^* $ allows us to simplify $ \partial_{\boldsymbol{\Phi}_\txt{x}}{J}$ using \eqref{eq:ap_DPB} as
	\begin{align}
	\partial_{\boldsymbol{\Phi}_\txt{x}}{J}&= 2\Re\{(\partial_{\mathbf{K}}{J})(\partial_{\mathbf{{F}}_\txt{x}}\mathbf{K})(\partial_{\boldsymbol{\Phi}_\txt{x}}\mathbf{{F}}_\txt{x})\}\nonumber\\
	&= 2\Re\{j(\partial_{\mathbf{K}}{J})(\partial_{\mathbf{{F}}_\txt{x}}\mathbf{K})\txt{diag}(\txt{vec}(\mathbf{{F}}_\txt{x}))\}\nonumber\\
	&=-2\Im\{(\partial_{\mathbf{K}}{J})(\partial_{\mathbf{{F}}_\txt{x}}\mathbf{K})\circ \txt{vec}^\txt{T}(\mathbf{{F}}_\txt{x})\}. \label{eq:DDPhi_final}
	\end{align}
	Finally, we obtain \eqref{eq:grad} by reshaping \eqref{eq:DDPhi_final} into a $ N_\txt{x}\times Q $ matrix.
\unappendix

	\bibliographystyle{IEEEtran}
	\bibliography{IEEEabrv,references}

\begin{thebibliography}{10}
\providecommand{\url}[1]{#1}
\csname url@samestyle\endcsname
\providecommand{\newblock}{\relax}
\providecommand{\bibinfo}[2]{#2}
\providecommand{\BIBentrySTDinterwordspacing}{\spaceskip=0pt\relax}
\providecommand{\BIBentryALTinterwordstretchfactor}{4}
\providecommand{\BIBentryALTinterwordspacing}{\spaceskip=\fontdimen2\font plus
\BIBentryALTinterwordstretchfactor\fontdimen3\font minus
  \fontdimen4\font\relax}
\providecommand{\BIBforeignlanguage}[2]{{%
\expandafter\ifx\csname l@#1\endcsname\relax
\typeout{** WARNING: IEEEtran.bst: No hyphenation pattern has been}%
\typeout{** loaded for the language `#1'. Using the pattern for}%
\typeout{** the default language instead.}%
\else
\language=\csname l@#1\endcsname
\fi
#2}}
\providecommand{\BIBdecl}{\relax}
\BIBdecl

\bibitem{hoctor1990theunifying}
R.~T. Hoctor and S.~A. Kassam, ``The unifying role of the coarray in aperture
  synthesis for coherent and incoherent imaging,'' \emph{Proceedings of the
  IEEE}, vol.~78, no.~4, pp. 735--752, Apr 1990.

\bibitem{pal2010nested}
P.~Pal and P.~P. Vaidyanathan, ``Nested arrays: A novel approach to array
  processing with enhanced degrees of freedom,'' \emph{IEEE Transactions on
  Signal Processing}, vol.~58, no.~8, pp. 4167--4181, Aug 2010.

\bibitem{wang2017coarrays}
M.~Wang and A.~Nehorai, ``Coarrays, {MUSIC}, and the {C}ram\'{e}r-{R}ao
  bound,'' \emph{IEEE Transactions on Signal Processing}, vol.~65, no.~4, pp.
  933--946, Feb 2017.

\bibitem{kozick1991linearimaging}
R.~J. Kozick and S.~A. Kassam, ``Linear imaging with sensor arrays on convex
  polygonal boundaries,'' \emph{IEEE Transactions on Systems, Man, and
  Cybernetics}, vol.~21, no.~5, pp. 1155--1166, Sep 1991.

\bibitem{molisch2017hybrid}
A.~F. Molisch, V.~V. Ratnam, S.~Han, Z.~Li, S.~L.~H. Nguyen, L.~Li, and
  K.~Haneda, ``Hybrid beamforming for massive {MIMO}: A survey,'' \emph{IEEE
  Communications Magazine}, vol.~55, no.~9, pp. 134--141, September 2017.

\bibitem{zhang2005variable}
X.~Zhang, A.~F. Molisch, and S.-Y. Kung, ``Variable-phase-shift-based
  rf-baseband codesign for {MIMO} antenna selection,'' \emph{IEEE Transactions
  on Signal Processing}, vol.~53, no.~11, pp. 4091--4103, Nov 2005.

\bibitem{zhang2014onachieving}
E.~Zhang and C.~Huang, ``On achieving optimal rate of digital precoder by
  rf-baseband codesign for {MIMO} systems,'' in \emph{IEEE 80th Vehicular
  Technology Conference (VTC2014-Fall)}, Sept 2014, pp. 1--5.

\bibitem{yu2016alternating}
X.~Yu, J.~Shen, J.~Zhang, and K.~B. Letaief, ``Alternating minimization
  algorithms for hybrid precoding in millimeter wave {MIMO} systems,''
  \emph{IEEE Journal of Selected Topics in Signal Processing}, vol.~10, no.~3,
  pp. 485--500, April 2016.

\bibitem{jin2018hybridprecoding}
J.~Jin, Y.~R. Zheng, W.~Chen, and C.~Xiao, ``Hybrid precoding for millimeter
  wave {MIMO} systems: A matrix factorization approach,'' \emph{IEEE
  Transactions on Wireless Communications}, vol.~17, no.~5, pp. 3327--3339, May
  2018.

\bibitem{koochakzadeh2018beam}
A.~Koochakzadeh and P.~Pal, ``Beam-pattern design for hybrid beamforming using
  wirtinger flow,'' in \emph{IEEE 19th International Workshop on Signal
  Processing Advances in Wireless Communications (SPAWC)}, 2018, pp. 1--5.

\bibitem{sohrabi2016hybrid}
F.~Sohrabi and W.~Yu, ``Hybrid digital and analog beamforming design for
  large-scale antenna arrays,'' \emph{IEEE Journal of Selected Topics in Signal
  Processing}, vol.~10, no.~3, pp. 501--513, April 2016.

\bibitem{bogale2016onthenumber}
T.~E. Bogale, L.~B. Le, A.~Haghighat, and L.~Vandendorpe, ``On the number of rf
  chains and phase shifters, and scheduling design with hybrid analog–digital
  beamforming,'' \emph{IEEE Transactions on Wireless Communications}, vol.~15,
  no.~5, pp. 3311--3326, May 2016.

\bibitem{alkhateeb2014mimo}
A.~Alkhateeb, J.~Mo, N.~Gonz\'{a}lez-Prelcic, and R.~W. Heath, ``{MIMO}
  precoding and combining solutions for millimeter-wave systems,'' \emph{IEEE
  Communications Magazine}, vol.~52, no.~12, pp. 122--131, December 2014.

\bibitem{rajamaki2019hybrid}
R.~Rajam\"{a}ki, S.~P. Chepuri, and V.~Koivunen, ``Hybrid beamforming for
  active sensing using sparse arrays,'' \emph{Manuscript in preparation}, 2019.

\bibitem{hjorungnes2007complex}
A.~Hjorungnes and D.~Gesbert, ``Complex-valued matrix differentiation:
  Techniques and key results,'' \emph{IEEE Transactions on Signal Processing},
  vol.~55, no.~6, pp. 2740--2746, June 2007.

\bibitem{moffet1968minimumredundancy}
A.~Moffet, ``Minimum-redundancy linear arrays,'' \emph{IEEE Transactions on
  Antennas and Propagation}, vol.~16, no.~2, pp. 172--175, Mar 1968.

\bibitem{hoctor1996arrayredundancy}
R.~T. Hoctor and S.~A. Kassam, ``Array redundancy for active line arrays,''
  \emph{IEEE Transactions on Image Processing}, vol.~5, no.~7, pp. 1179--1183,
  Jul 1996.

\bibitem{kohonen2014meet}
J.~Kohonen, ``A meet-in-the-middle algorithm for finding extremal restricted
  additive 2-bases,'' \emph{Journal of Integer Sequences}, vol.~17, no.~6,
  2014.

\bibitem{dolph1946acurrent}
C.~L. Dolph, ``A current distribution for broadside arrays which optimizes the
  relationship between beam width and side-lobe level,'' \emph{Proceedings of
  the IRE}, vol.~34, no.~6, pp. 335--348, June 1946.

\bibitem{roemer2013advanced}
F.~Roemer, ``Advanced algebraic concepts for efficient multi-channel signal
  processing,'' Ph.D. dissertation, Ilmenau University of Technology, 2013.

\end{thebibliography}
	
\end{document}